\newcommand{\cH}{\mathcal{H}}
\newcommand{\ot}{\otimes}
\newcommand{\ket}[1]{| #1 \rangle}
\newcommand{\proj}[1]{|#1\rangle\!\langle#1|}
\newcommand{\tr}{\operatorname{tr}}
    \newwrite\bibnotes
    \def\bibnotesext{Notes.bib}
\write\bibnotes{@CONTROL{REVTEX41Control}}
\write\bibnotes{@CONTROL{%
    apsrev41Control,author="08",editor="1",pages="1",title="0",year="0"}}
\write\@auxout{\string\citation{apsrev41Control}}%
\tikzset{>=stealth', c/.style={draw, regular polygon, regular polygon sides=3, minimum height=2.8em, inner sep=0, fill=GreenYellow}, q/.style={draw, circle, minimum height=2.4em, inner sep=0, fill=Melon}, r/.style={draw, rectangle, minimum height=2em, fill= SpringGreen, inner sep=0.5em}, e/.style={<-}, d/.style={-}, f/.style={<->}}
\newtheorem{theorem}{Theorem}
\newtheorem{lemma}{Lemma}
\theoremstyle{definition}
\newtheorem{definition}{Definition}
\begin{document}
\title{Closing the problem of which causal structures of up to six total nodes have a classical-quantum gap}
\author{Shashaank Khanna}
\email{shashaank.khanna@lis-lab.fr}
\affiliation{Department of Mathematics, University of York, Heslington, York, YO10 5DD, United Kingdom}
\affiliation{Aix-Marseille University, CNRS, LIS, Marseille, France}
\author{Matthew Pusey}
\email{matthew.pusey@york.ac.uk}
\affiliation{Department of Mathematics, University of York, Heslington, York, YO10 5DD, United Kingdom}
\author{Roger Colbeck}
\email{roger.colbeck@kcl.ac.uk}
\affiliation{Department of Mathematics, University of York, Heslington, York, YO10 5DD, United Kingdom}
\affiliation{Department of Mathematics, King's College London, Strand, London, WC2R 2LS, United Kingdom}
\date{\today}
\begin{abstract}
The discovery of Bell that there exist quantum correlations that cannot be reproduced classically is one of the most important in the foundations of quantum mechanics, as well as having practical implications. Bell's result was originally proven in a simple bipartite causal structure, but analogous results have also been shown in further causal structures. Here we study the only causal structure with six or fewer nodes in which the question of whether or not there exist quantum correlations that cannot be achieved classically was open. In this causal structure we show that such quantum correlations exist using a method that involves imposing additional restrictions on the correlations. This hence completes the picture of which causal structures of up to six nodes support non-classical quantum correlations. We also provide further illustrations of our method using other causal structures.
\end{abstract}
\maketitle

\section{Introduction}
Given a set of observed variables, a causal structure encodes a set of dependencies between them and acts as a shorthand for constraints on the joint distribution of those variables. As well as the observed variables a causal structure may also have unobserved systems, whose nature depends on the theory in question. In the classical case, these correspond to random variables, while in the quantum case they may be quantum systems. In this work we are interested in the sets of possible correlations that can be realised in the classical and quantum cases, and, in particular, in whether these sets are different. In cases where they are different we will say that the causal structure supports non-classical quantum correlations, or that it has a classical-quantum gap. 

Bell's theorem~\cite{bell,bellnouvelle} shows the existence of a causal structure supporting non-classical quantum correlations. In fact, using the principle of no fine tuning, which in essence states that one variable can only be the cause of another if the two variables are correlated, Bell's result can be extended to show that there are quantum correlations that cannot be explained classically in any causal structure~\cite{wood}. Fritz~\cite{fritz2012beyond,fritz2016beyond} presented further examples of causal structures with a classical-quantum gap.

We now restrict our attention to the 366565 causal structures with at most six nodes. Henson, Lal and Pusey (HLP)~\cite{HLP_2014} showed that for all but 21 of these there is no classical-quantum gap (or the existence of such a gap can be reduced to one of the 21). 

In~\cite{bell,fritz2012beyond,van2019quantum,chaves2018quantum,lauand2024quantum}, 4 of those 21 causal structures were shown to have a classical-quantum gap, namely the ``Bell"~\cite{bell}, the ``Instrumental"~\cite{van2019quantum,chaves2018quantum}, the ``Triangle"~\cite{fritz2012beyond} and the ``Unrelated Confounders"~\cite{lauand2024quantum} causal structures. Further, in an upcoming work~\cite{upcoming2}, it is shown that 16 of the remaining 17 causal structures also have a classical-quantum gap. The remaining causal structure is shown in Figure~\ref{fig_qc_gap}, and our main result is that it can also support quantum correlations that are non-classical.  We further illustrate our method using the causal structures of Figures~\ref{fig_qc_gap2} and~\ref{fig3}. Therefore, together with the results of~\cite{upcoming2}, we complete the problem of identifying which causal structures of up to six nodes have a classical-quantum gap: the 21 from Appendix~E of~\cite{HLP_2014} (and causal structures that can be reduced to one of them) do, and the remainder do not. 

\section{Classical and quantum correlations within a causal structure}
In this section we give a few definitions to set up our main question.

\begin{definition}[Causal Structure]
A \emph{causal structure} is a directed acyclic graph (DAG) over a set of nodes, each of which is either labelled as latent (denoted here within a circle) or observed (denoted here within a triangle).
\end{definition}
Figure~\ref{fig_qc_gap} gives an example and is the main causal structure we will consider in this work.  Directed edges are interpreted as indicating (the possibility of) direct causal influences. The observed nodes represent things that can be seen and hence each will have an associated random variable\footnote{We use upper case to denote random variables and lower case to denote specific instances of those variables.}. The latent nodes are unobserved and the way they are modelled depends on the underlying theory (in this work we are mainly concerned with classical or quantum theory).

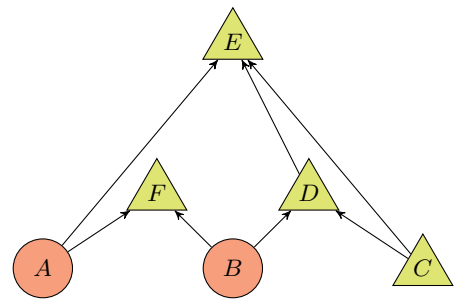
\begin{figure}
    \centering
     \begin{tikzpicture}
            \node[q](B) at (1,0){$B$};
			\node[q](A) at (-1.5,0){$A$};
			\node[c](C) at (3.5,0){$C$};
			\node[c](D) at (2,1){$D$}
			edge[e] (B)
			edge[e] (C);
			\node[c](F) at (0,1){$F$}
			edge[e] (A)
			edge[e] (B);
           \node[c] (E) at (1,3){$E$}
            edge[e] (A)
            edge[e] (C)
            edge[e] (D);
     \end{tikzpicture}
    \caption{The causal structure $G_1$ that is the focus of this work.}
    \label{fig_qc_gap}
\end{figure}

Consider a causal structure $G$ with nodes $X_1,\ldots,X_n$ of which $X_{m+1},\ldots,X_n$ are latent. We use $X_i^{\downarrow}$ to denote the parents of $X_i$ in $G$ (i.e., the nodes from which there is a directed edge to $X_i$ in $G$). In the classical case, each latent node has an associated random variable, and a distribution $P(X_1\ldots X_m)$ over the observed variables is said to be \emph{compatible} with $G$ if there exists a distribution $Q(X_1\ldots X_n)$ over all nodes such that
\begin{eqnarray}
  P(x_1\ldots x_m)&=&\sum_{x_{m+1}\ldots x_n}Q(x_1\ldots x_n),\text{ and}\\
  Q(x_1\ldots x_n)&=&\prod_{i=1}^nQ(x_i|x_i^{\downarrow})\,,\label{eq:prod}
\end{eqnarray}
where we use lower case to denote specific instances of upper case random variables. This condition implies that any variable is conditionally independent of its non-descendants given its parents (Theorem~1.2.7 in~\cite{pearl}), which is known as the causal Markov condition.

Furthermore, this condition implies a number of other conditional independences. These can be conveniently read off the causal structure using the concept of d-separation, which we now introduce.

\begin{definition}[Blocked paths]
Consider a causal structure in which $X$ and $Y$ are nodes and let $Z$ be a set of nodes not containing $X$ or $Y$.  A path from $X$ to $Y$ is said to be \emph{blocked} by $Z$ if it contains either $A\to W\to B$ with $W\in Z$, $A\to W\to B$ with $W\in Z$ or $A\to W\to B$ such that neither $W$ nor any descendant of $W$ belongs to $Z$.
\end{definition}

\begin{definition}[d-separation]
\label{def: d-sep}
Consider a causal structure in which $X$, $Y$ and $Z$ are disjoint sets of nodes.  $X$ and $Y$ are \emph{d-separated} by $Z$, denoted $X\perp Y|Z$ if either there is no path between any node in $X$ and any node in $Y$, or if every path from a node in $X$ to a node in $Y$ is \emph{blocked} by $Z$.
\end{definition}

The following theorem proven in~\cite{Geiger1987,Verma1988} is then useful for inferring additional conditional independences.
\begin{theorem}\label{thm:dsep}
Let $G$ be a classical causal structure and $Q$ be a distribution over all the variables in $G$. $Q$ can be written in the form~\eqref{eq:prod} if and only if for any disjoint sets of nodes $X$, $Y$ and $Z$ in the causal structure with $X\perp Y|Z$, the distribution satisfies $Q(X|YZ)=Q(X|Z)$.
\end{theorem}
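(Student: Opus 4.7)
The plan is to prove the two directions separately, noting that this is a classical result (Verma--Pearl) whose proof proceeds via the equivalence of the global d-separation property with the local Markov condition (each variable is independent of its non-descendants given its parents).

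For the ``only if'' direction, assume $Q$ factorizes as in \eqref{eq:prod} and let $X\perp Y|Z$ in $G$. My approach would be the standard ancestral-moralization argument: first restrict attention to the subgraph $G'$ induced by the ancestors of $X\cup Y\cup Z$, because marginalizing out non-ancestors in \eqref{eq:prod} preserves the factorization (each non-ancestor appears only as the child in its own factor, which sums to 1). Next, moralize $G'$ by adding an undirected edge between every pair of nodes sharing a common child and then dropping orientations. A classical fact is that d-separation in the DAG corresponds to ordinary graph separation in the moral ancestral graph, so $Z$ separates $X$ from $Y$ in the undirected graph. The factorization \eqref{eq:prod} becomes a product of clique potentials on the moral graph, and ordinary graph separation then yields $Q(X|YZ)=Q(X|Z)$ by a direct calculation that splits the product across the separator $Z$.

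For the ``if'' direction, I would use that any node $X_i$ is d-separated from its non-descendants (excluding its parents) given its parents $X_i^{\downarrow}$: on any path from $X_i$ to a non-descendant $W$, either the path contains a chain or fork through some parent (blocked by $X_i^{\downarrow}$), or it contains a collider whose descendants cannot lie in $X_i^{\downarrow}$ (since parents of $X_i$ cannot be descendants of $X_i$ in an acyclic graph), so the path is blocked. The hypothesis therefore gives the local Markov property $Q(X_i|X_1\ldots X_{i-1})=Q(X_i|X_i^{\downarrow})$ whenever the variables are ordered in a topological order (so that $\{X_1,\ldots,X_{i-1}\}$ consists only of non-descendants of $X_i$). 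Applying the chain rule $Q(x_1\ldots x_n)=\prod_i Q(x_i|x_1\ldots x_{i-1})$ and substituting then yields \eqref{eq:prod}.

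The main obstacle is the soundness half: the collider rule in d-separation makes a naive induction on path structure awkward, because conditioning can unblock paths through common effects. The ancestral-moralization reduction is the cleanest way around this, but verifying that d-separation in $G$ coincides with graph separation in the moral ancestral graph is itself a small graph-theoretic lemma that requires a careful case analysis of how colliders become edges after moralization. Since the statement is essentially the Verma--Geiger theorem cited as~\cite{Geiger1987,Verma1988}, I would at this point simply refer the reader to those references and to the textbook treatment in~\cite{pearl} rather than reproduce the full technical argument.
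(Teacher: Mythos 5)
The paper does not prove this theorem at all---it states it as a known result and cites \cite{Geiger1987,Verma1988}, which is also where your sketch ultimately defers. Your outline of the two directions (soundness of d-separation via the ancestral--moralization reduction, and recovery of the factorization~\eqref{eq:prod} via the local Markov property together with the chain rule in a topological order, justified by checking that each node is d-separated from its non-descendants given its parents) is the standard and correct argument, so your treatment is consistent with, and somewhat more informative than, the paper's.
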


In the quantum case, consider a parentless\footnote{We only consider parentless latent nodes here as the more general case is not needed for the present paper.} latent node $X$ with children $X^\uparrow$. For each element $Y$ of $X^\uparrow$ there is a Hilbert space labelled $\cH_X^Y$ and the latent node $X$ is associated with a joint quantum state on the tensor product of these Hilbert spaces, $\bigotimes_{Y\in X^\uparrow}\cH_X^Y$. Each observed node $Y$ with latent parents $Y^L$ and observed parents $Y^O$ has an associated POVM that acts on $\bigotimes_{Q_i\in Y^L}\cH_{Q_i}^Y$, where this POVM can depend on the values of the variables $Y^O$. The joint distribution is formed by applying the POVMs to the quantum state (see Definition~\ref{def:q} for an example). If $Y^L$ is empty, taking the empty tensor product to be $\mathbb{C}$, since a POVM on $\mathbb{C}$ is just a probability distribution, an observed node with no latent parents is assigned a probability distribution (conditioned on the values of any observed parents), recovering the classical case.

In this work we will mainly be interested in the causal structure shown in Figure~\ref{fig_qc_gap}, which we henceforth call $G_1$.  Prior to this work it was not known whether $G_1$ exhibits a gap between the distributions realisable classically and those realisable quantum mechanically, which we now formally define.

\begin{definition}[Classical compatibility with the causal structure $G_1$]\label{def:2}
  A distribution $P(CDEF)$ is said to be classically compatible with $G_1$ if there exists a joint distribution $Q(ABCDEF)$ such that
  \begin{multline}   
    P(cdef)=\sum_{ab}Q(abcdef) \text{\;\;and}\\
    Q(abcdef)= Q(a)Q(b)Q(c)Q(d|bc)Q(e|acd)Q(f|ab)\,.
     \label{eqn1}
	\end{multline}
\end{definition}

\begin{definition}[Quantum compatibility with $G_1$]\label{def:q}
  A distribution $P(CDEF)$ is said to be quantum mechanically compatible with $G_1$ if there exist Hilbert spaces $\cH_A^E$, $\cH_A^F$, $\cH_B^F$, $\cH_B^D$, joint quantum states $\rho_A$ on $\cH_A^E\ot\cH_A^F$ and $\rho_B$ on $\cH_B^F\ot\cH_B^D$, and POVMs $\{K^c_d\}_d$ on $\cH_B^D$ (for each $c$), $\{M^{c,d}_e\}_e$ on $\cH_A^F$ (for each $c,d$) and $\{N_f\}_f$ on $\cH_A^F\ot\cH_B^F$ such that
  \begin{multline}
    P(cdef)=\tr((M^{c,d}_e\ot N_f\ot K^c_d)(\rho_A\ot\rho_B))P(c).
     \label{eqn2}
	\end{multline}
\end{definition}

In the next sections we show that the causal structures of Figures~\ref{fig_qc_gap} and~\ref{fig_qc_gap2} support a classical-quantum gap, and we apply our technique to give an alternative argument for a classical-quantum gap in the Triangle causal structure (Figure~\ref{fig3}).

\section{Results}
Our main result can be stated as follows.
\begin{theorem}
 There exist correlations that are quantum mechanically compatible with $G_1$ but not classically compatible with it.
\label{qctheorem1}
\end{theorem}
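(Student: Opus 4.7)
The idea is to exploit the structural feature of $G_1$ that $F$'s POVM acts jointly on $\cH_A^F\ot\cH_B^F$, which is exactly the setup for entanglement swapping, and then reduce the question of non-classicality to a CHSH-type argument on the swapped entanglement.

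\emph{Quantum construction.} I would take $\rho_A$ and $\rho_B$ to be maximally entangled Bell pairs on $\cH_A^E\ot\cH_A^F$ and $\cH_B^F\ot\cH_B^D$ respectively, and $\{N_f\}_{f\in\{0,1,2,3\}}$ to be the Bell-basis measurement on $\cH_A^F\ot\cH_B^F$. Conditioned on outcome $f$, entanglement swapping leaves $\cH_A^E\ot\cH_B^D$ in one of the four Bell states, up to an $f$-dependent Pauli correction. I would let $c=(x,y)$ with $x,y\in\{0,1\}$ and $P(c)=1/4$, choose $K^c_d$ on $\cH_B^D$ to depend only on $x$, and $M^{c,d}_e$ on $\cH_A^E$ to depend only on $y$; the specific POVMs are the Tsirelson-optimal qubit observables for CHSH.

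\emph{Restrictions.} The resulting quantum distribution $P^Q$ automatically satisfies $P^Q(f)=1/4$, $F\perp C$, $P^Q(d\mid c,f)=1/2$ with no $y$-dependence, and $P^Q(e\mid c,f)=1/2$ with no $x$-dependence. These statistical properties are the additional restrictions to impose on the correlations.

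\emph{Classical analysis.} Suppose for contradiction that $P^Q$ admits a classical realisation $Q(a)Q(b)Q(c)Q(d\mid bc)Q(e\mid acd)Q(f\mid ab)$. Since $A\perp B\perp C$, one has $Q(a,b\mid c,f)=Q(a,b\mid f)$, so
\begin{equation*}
  P^Q(d,e\mid c,f)=\sum_{a,b}Q(a,b\mid f)\,Q(d\mid b,c)\,Q(e\mid a,c,d).
\end{equation*}
The imposed marginal restrictions combined with this factorisation pin down the classical responses: $Q(d\mid b,c)$ must be effectively $y$-blind, and $Q(e\mid a,c,d)$ effectively $x$- and $d$-blind, when averaged against $Q(\cdot,\cdot\mid f)$. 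Hence $P^Q(d,e\mid c,f)$ admits, for each $f$, a local-hidden-variable model with one-bit settings $x,y$ and shared randomness $(a,b)\sim Q(\cdot,\cdot\mid f)$, and therefore
\begin{equation*}
  \sum_{x,y\in\{0,1\}}(-1)^{xy}\,\langle(-1)^{d+e}\rangle_{x,y,f}\;\le\;2
\end{equation*}
for every $f$. Writing $\sigma(f)\in\{\pm1\}$ for the sign induced on this correlator by the $f$-th Pauli correction of the swap, the quantum construction yields $\sigma(f)\cdot 2\sqrt2$ on each branch, so summing the bound weighted by $\sigma(f)$ gives $4\cdot 2\sqrt2>4\cdot 2$, a contradiction, proving Theorem~\ref{qctheorem1}.

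\emph{Main obstacle.} The technical crux is the classical analysis step: the edge $D\to E$ in principle lets $Q(e\mid a,c,d)$ depend genuinely on $d$, encoding one-way signalling from $D$ to $E$ that could defeat a naive Bell argument. The imposed marginal restrictions are what block this, but making the reduction to a bona fide LHV form rigorous — showing that any non-trivial $d$-, $x$-, or $y$-dependence in the response functions would contradict one of the statistical restrictions on $P^Q$ — is the heart of the proof and is where the additional restrictions of the method do their real work.
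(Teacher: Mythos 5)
There is a genuine gap, and it is fatal to the construction rather than merely a missing lemma: the entanglement-swapped distribution you build is exactly classically realisable in $G_1$, so no choice of ``statistical restrictions'' satisfied by it can certify a gap. The problem is the combination of (i) putting both CHSH settings $x,y$ into the single node $C$, which is a parent of both $D$ and $E$, and (ii) the edges $A\to E$ and $D\to E$. Concretely, take $A$ uniform on $\{0,1,2,3\}$ playing the role of the Bell-measurement outcome with $Q(f|ab)=\delta_{f,a}$, take $B$ a uniform bit with $Q(d|bc)=\delta_{d,b}$, and take $Q(e|acd)=P^Q(e|d,c,f{=}a)$, which is a legitimate response function since $E$'s parents are exactly $A$, $C$ and $D$. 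Summing out $a$ and $b$ reproduces $P^Q(cdef)=P(c)\cdot\tfrac14\cdot\tfrac12\cdot P^Q(e|dcf)$ exactly. The same mechanism defeats your ``classical analysis'' step in isolation: the strategy $d=b$ (a uniform bit) and $e=d\oplus xy$ satisfies every restriction you list ($P(f)=1/4$, $F\perp C$, uniform and setting-blind marginals for $D$ and $E$) yet realises a PR box with CHSH value $4$. So the marginal restrictions cannot ``pin down'' the response functions in the way you hope; the only condition that would block the $D\to E$ channel is $E\perp D$ given the observed nodes, and that would also destroy the quantum correlations you are trying to exhibit. You correctly identified this as the crux, but it is not a technical hurdle to be overcome --- it is the reason the approach fails.

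The paper's proof sidesteps this by never using $E$ as a Bell \emph{outcome}. It keeps a single entangled pair at $B$, uses $A$ as a classical coin that determines $E$, and imposes that $F=(F_O,F_S)$ with $F_S=E$ and that $E$ is independent of $C$ and $D$. These conditions force $E$ to originate in $A$ alone, neutralising the inbound edges $C\to E$ and $D\to E$; the Bell test then has settings $E$ and $C$ and outcomes $F_O$ and $D$, where $F_O\perp CD\,|\,B$ and $D\perp A\,|\,BC$ hold by d-separation, giving a bona fide local-causality decomposition (Lemma~\ref{lem:1}) that the quantum model violates via CHSH. If you want to salvage an entanglement-swapping idea, the settings must enter through nodes that are d-separated from the opposite wing's outcome, which $C$ is not for the pair $(D,E)$ in $G_1$.
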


The idea behind the proof is to consider correlations satisfying the following additional conditions:
\begin{enumerate}
\item\label{rel1} The variable $F$ breaks down into two, i.e., $F=(F_O,F_S)$, with $F_S$ always equal to $E$.
\item\label{rel2} The variable $E$ is independent of $C$ and $D$, i.e., $P(e|cd)=P(e)$.
\end{enumerate}
In essence, these conditions enforce that $E$ must be determined by $A$ and that the causal influences from $D$ to $E$ and from $C$ to $E$ are not used. In the classical case this then sets up Bell's local causality conditions~\cite{bellnouvelle} with `settings' $E$ and $C$, and `outcomes' $F_O$ and $D$. In the quantum case we can associate an entangled state with node $B$ to violate local causality.

We break down the proof of Theorem~\ref{qctheorem1} into two parts.
\begin{lemma}\label{lem:1}
  Let $P(CDEF)$ be classically compatible with $G_1$ and satisfy Conditions~\ref{rel1} and~\ref{rel2}. Then
  \begin{align}
    P(cdef_O)&=\sum_{ab}Q(abcdef_O)\nonumber\\
             &=\sum_bQ(b)Q(c)Q(e)Q(d|bc)Q(f_O|be).\label{eq:bell}
    \end{align}
\end{lemma}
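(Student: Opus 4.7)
The strategy is to use Condition~\ref{rel1} to force both $E$ and $F_S$ to equal a common deterministic function $g(A)$ of the latent $A$ alone, after which $P(cdef_O)$ collapses into a standard Bell factorization with $B$ as the shared latent, $C,E$ as ``settings'' and $D,F_O$ as ``outcomes''.

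Writing $F=(F_O,F_S)$ and starting from the classical factorization of Definition~\ref{def:2}, fix any $(a,b)$ with $Q(a)Q(b)>0$. In the structural model, $F$ is sampled from $Q(f|ab)$ using exogenous noise that is independent of the noise generating $C,D,E$, so $E$ and $F_S$ are \emph{independent} conditional on $(A,B)=(a,b)$. Condition~\ref{rel1} asserts that these two independent variables are equal almost surely; two independent random variables that coincide almost surely must both be almost surely equal to the same constant, which we call $h(a,b)$. Constancy of $E$ given $(a,b)$ requires $Q(e|acd)=\delta_{e,h(a,b)}$ for every $(c,d)$ with $Q(c)Q(d|bc)>0$; since the left-hand side does not depend on $b$, the value $h(a,b)$ is independent of $b$. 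Denote it $g(a)$. Then $E=F_S=g(A)$ almost surely, i.e.\ $Q(e|acd)=\delta_{e,g(a)}=Q(f_S=e|ab)$ on the support.

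The main subtlety lies in this determinism step, particularly in tracking which $(a,b,c,d)$ lie in the relevant support. Once it is established, the rest is routine: since $F_S$ is a deterministic function of $A$ alone, $F_O$ and $F_S$ are trivially conditionally independent given $(A,B)$, yielding $Q(f_O,f_S=e|ab)=Q(f_O|ab)\,\delta_{e,g(a)}$.

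Substituting into $P(cdef_O)=\sum_{a,b,f_S}Q(abcdef_Of_S)$, only the terms with $f_S=e=g(a)$ survive, and we obtain
\begin{equation*}
P(cdef_O)=\sum_{a,b}Q(a)\,Q(b)\,Q(c)\,Q(d|bc)\,\delta_{e,g(a)}\,Q(f_O|ab).
\end{equation*}
Writing $Q(e):=\sum_{a}Q(a)\,\delta_{e,g(a)}$ for the marginal of $E$ and using $A\perp B$ to identify $\sum_a Q(a)\,\delta_{e,g(a)}\,Q(f_O|ab)=Q(e)\,Q(f_O|b,e)$, the expression rearranges into the claimed form. Condition~\ref{rel2} serves as a consistency check rather than an additional hypothesis here, since $E=g(A)$ combined with $A\perp(C,D)$ already implies $P(e|cd)=P(e)$.
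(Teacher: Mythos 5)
Your overall strategy (first prove a determinism statement about $E$, then factorize) is in the spirit of Fritz's argument rather than the paper's, but the key determinism step is wrong. Conditional on $(A,B)=(a,b)$ you correctly obtain that $E$ and $F_S$ are independent (by d\nobreakdash-separation) and equal, hence both almost surely constant, say $h(a,b)$. However, your claim that $h(a,b)$ cannot depend on $b$ ``because $Q(e|acd)$ does not depend on $b$'' fails: the \emph{support} of $(C,D)$ given $(a,b)$ does depend on $b$, so the constraint $Q(e|acd)=\delta_{e,h(a,b)}$ is imposed on different sets of $(c,d)$ for different $b$ and need not pin down a common value. Concretely, take $A,B,C$ uniform bits, $D=B\oplus C$, $E=A\oplus C\oplus D$ (a legitimate function of $E$'s parents $A,C,D$), and $F_S=A\oplus B$. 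Then $E=A\oplus B=F_S$ always, so Condition~\ref{rel1} holds, and $P(e|cd)=P(e)=\tfrac12$ because $A$ stays uniform given $(c,d)$, so Condition~\ref{rel2} holds; yet $E$ is not a function of $A$ alone. Your intermediate claim $E=g(A)$ is therefore false under the lemma's hypotheses, and this is exactly the support-bookkeeping subtlety you flagged but did not resolve.

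Relatedly, your remark that Condition~\ref{rel2} is ``a consistency check rather than an additional hypothesis'' is incorrect: Equation~\eqref{eq:bell} implies $P(e|cd)=Q(e)=P(e)$, so any model violating Condition~\ref{rel2} (e.g.\ $D=B\oplus C$, $E=C\oplus D=B=F_S$, which satisfies Condition~\ref{rel1}) violates the conclusion, and the condition must do real work in any proof. Your route can be repaired: keep $h$ as a function of both $a$ and $b$, note $Q(f_O,f_S{=}e|ab)=Q(f_O|ab)\,\delta_{e,h(a,b)}$, identify $\sum_aQ(a)\,\delta_{e,h(a,b)}\,Q(f_O|ab)=Q(e|b)\,Q(f_O|be)$, and then use Condition~\ref{rel2} together with the d\nobreakdash-separation $E\perp B|CD$ to get $Q(e|b)=\sum_{cd}Q(cd|b)Q(e|cd)=Q(e)$. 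The paper's proof sidesteps determinism entirely, working only with the conditional independences $F\perp CD|B$, $E\perp B|CD$ and $B\perp C$ together with the two Conditions, which is why it requires no analysis of supports.
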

\begin{lemma}\label{lem:2}
There exist distributions $P(CDEF)$ that are quantum mechanically compatible with $G_1$, satisfy Conditions~\ref{rel1} and~\ref{rel2} but do not satisfy Equation~\eqref{eq:bell}.
\end{lemma}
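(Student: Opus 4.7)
The plan is to construct an explicit quantum model of $G_1$ that, once Conditions~\ref{rel1} and~\ref{rel2} are imposed, reduces to a CHSH Bell scenario with settings $(C,E)$ and outputs $(D,F_O)$ driven by the latent node $B$. I would take $\cH_A^E=\cH_A^F=\mathbb{C}^2$ with $\rho_A=\tfrac12(\proj{00}+\proj{11})$, i.e., a classical bit copied into both subregisters of $A$; and $\cH_B^F=\cH_B^D=\mathbb{C}^2$ with $\rho_B$ a maximally entangled state such as $\proj{\Phi^+}$. Let $P(C)$ be uniform on $\{0,1\}$. For the measurements, take $M^{c,d}_e$ to be the computational-basis projective measurement on $\cH_A^E$ (independent of $c,d$), $K^c_d$ a projective measurement on $\cH_B^D$ in a basis $\{\ket{\alpha^0_c},\ket{\alpha^1_c}\}$ depending on $c$, and $N_{(f_O,f_S)}=\proj{f_S}\ot\Pi^{f_S}_{f_O}$ on $\cH_A^F\ot\cH_B^F$, where $\{\Pi^s_{f_O}\}_{f_O}$ is a projective measurement on $\cH_B^F$ in a basis $\{\ket{\beta^0_s},\ket{\beta^1_s}\}$ depending on $s$. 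This last object is a legitimate joint POVM that effectively reads the classical bit from $\cH_A^F$ and uses it as the setting for a measurement on $\cH_B^F$.

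With this setup I would verify three things. Condition~\ref{rel1} holds because $\rho_A$ is perfectly classically correlated in the computational basis, so $M^{c,d}_e$ and the first factor of $N_f$ return the same bit, giving $F_S=E$ deterministically. Condition~\ref{rel2} holds because $M^{c,d}_e$ does not depend on $c,d$ and its outcome has uniform marginal $\tfrac12$. Marginalising over the redundant $f_S$ then yields
\[
P(cdef_O)=P(c)\,P(e)\,\tr\!\bigl(\bigl(\ket{\beta^{f_O}_e}\bra{\beta^{f_O}_e}\ot\ket{\alpha^d_c}\bra{\alpha^d_c}\bigr)\rho_B\bigr),
\]
which is precisely the joint two-party Bell statistic for $\rho_B$ with settings $(c,e)$.

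Choosing the bases $\{\alpha_c\}$ and $\{\beta_e\}$ to be the standard Tsirelson-optimal CHSH angles makes this conditional distribution violate the CHSH inequality. Equation~\eqref{eq:bell}, after cancelling the factors $Q(c)Q(e)$ (equal to $P(c)P(e)$), requires a local-hidden-variable decomposition $P(df_O|ce)=\sum_b Q(b)Q(d|bc)Q(f_O|be)$, which by Bell's theorem forces the CHSH value to be at most $2$. Since the Tsirelson value $2\sqrt2$ exceeds $2$, Equation~\eqref{eq:bell} fails for this distribution. The only subtlety I anticipate is bookkeeping: ensuring $N_f$ is formally a joint POVM on $\cH_A^F\ot\cH_B^F$ as required by Definition~\ref{def:q}, which the explicit product form above handles, and keeping the wing separation $\rho_A\ot\rho_B$ respected throughout.
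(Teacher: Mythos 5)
Your proposal is correct and follows essentially the same route as the paper: a classical uniform bit at $A$ copied to $E$ and to the setting register $F_S$, a maximally entangled state at $B$ measured in CHSH-optimal bases controlled by $E$ and $C$, and a CHSH violation ruling out the local-causality form of Equation~\eqref{eq:bell}. The only cosmetic differences are that you write $\rho_A$ and $N_f$ explicitly in the POVM formalism of Definition~\ref{def:q} (the paper keeps $A$ classical and relegates this to a footnote) and that you use the $\pm1$ CHSH normalisation with bound $2$ and value $2\sqrt2$ rather than the probability form with bound $3$ and value $4\cos^2(\pi/8)$.
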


\begin{proof}[Proof of Lemma~\ref{lem:1}]   
Since $P(CDEF)$ is classically compatible with $G_1$ we can write
  \begin{multline*}
    P(cdef)=\sum_{ab}Q(abcdef)
\\=\sum_{ab}Q(b)Q(c|b)Q(d|bc)Q(e|bcd)Q(f|bcde)Q(a|bcdef),
  \end{multline*}
where we have used the definition of conditional probability to expand $Q$ in the second line.  The following d-separation relations hold for $G_1$: $F\perp CD|B$, $E\perp B|CD$ and $B\perp C$. Theorem~\ref{thm:dsep} hence implies $Q(f|bcd)=Q(f|b)$, $Q(e|bcd)=Q(e|cd)$, and $Q(c|b)=Q(c)$. Using the last two of these we have
  \begin{align}
    P(cdef)=\sum_bQ(b)Q(c)Q(d|bc)Q(e|cd)Q(f|bcde).
  \end{align}
Condition~\ref{rel2} allows $Q(e|cd)$ to be replaced with $Q(e)$, and we then use Condition~\ref{rel1} to replace $f$ with $f_Sf_O$ and sum both sides over $f_S$ to give
  \begin{align}\label{eq:cdefO}
    P(cdef_O)=\sum_bQ(b)Q(c)Q(d|bc)Q(e)Q(f_O|bcde).
  \end{align}
Furthermore, using that $E$ equals $F_S$ (Condition~\ref{rel1}), we write $Q(ef_O|bcd)=Q(f_O f_S|bcd)=Q(f|bcd)=Q(f|b)=Q(ef_O|b)=Q(e|b)Q(f_O|be)=Q(e)Q(f_O|be).$\footnote{Where $Q(e|b)=Q(e)$ follows from $Q(e|bcd)=Q(e|cd)=Q(e)$.} Also noting that $Q(ef_O|bcd)=Q(f_O|bcde)Q(e)$, we get $Q(f_O|bcde)=Q(f_O|be)$ and substituting this into~\eqref{eq:cdefO} we recover~\eqref{eq:bell}.
\end{proof}
\begin{proof}[Proof of Lemma~\ref{lem:2}]
  We consider the following model, where, for brevity, we define $\ket{\theta}=\cos(\theta)\ket{0}+\sin(\theta)\ket{1}$:
  \begin{enumerate}
  \item Take $A$ to be a uniformly distributed classical bit\footnote{This could be put in quantum form by considering a state $\frac{1}{2}(\proj{00}+\proj{11})$ and using a measurement in the $\{\ket{0},\ket{1}\}$ basis whenever we want to reveal its value.}.
  \item Take $\cH^{F_O}_B$ and $\cH^D_B$ to be two-dimensional Hilbert spaces and $\rho_{\cH^{F_O}_B\cH^D_B}$ to be the projector onto $\frac{1}{\sqrt{2}}\left(\ket{00}+\ket{11}\right)$.
    \item Take $C$ to be a uniformly distributed classical bit.
  \item \label{it:q1} Take $F_S=A$. When $A=0$, take $F_O$ to be the outcome of a measurement in the $\{\ket{\theta=0},\ket{\theta=\pi/2}\}$ basis on $\cH^{F_O}_B$. When $A=1$, take $F_O$ to be the outcome of a measurement in the $\{\ket{\pi/4},\ket{3\pi/4}\}$ basis on $\cH^{F_O}_B$.
  \item \label{it:q2} When $C=0$, take $D$ to be the outcome of a measurement in the $\{\ket{\pi/8},\ket{5\pi/8}\}$ basis on $\cH^D_B$. When $C=1$, take $D$ to be the outcome of a measurement in the $\{\ket{-\pi/8},\ket{3\pi/8}\}$ basis on $\cH^D_B$.
    \item Take $E=A$.
  \end{enumerate}
  
\begin{table}[t]
\begin{tabular}{cccc|c}
\hphantom{0}$c$\hphantom{0} & \hphantom{0}$e$\hphantom{0} & \hphantom{.}$f_O$\hphantom{.} & \hphantom{0}$d$\hphantom{0} & \hphantom{0}$P(c\,e\,f_O\,d)$\hphantom{0} \\ \hline
0 & 0 & 0 & 0 & $\frac{1}{8}\cos^2(\frac{\pi}{8})$\\
0 & 0 & 0 & 1 & $\frac{1}{8}\sin^2(\frac{\pi}{8})$\\
0 & 0 & 1 & 0 & $\frac{1}{8}\sin^2(\frac{\pi}{8})$\\
0 & 0 & 1 & 1 & $\frac{1}{8}\cos^2(\frac{\pi}{8})$\\
0 & 1 & 0 & 0 & $\frac{1}{8}\cos^2(\frac{\pi}{8})$\\
0 & 1 & 0 & 1 & $\frac{1}{8}\sin^2(\frac{\pi}{8})$\\
0 & 1 & 1 & 0 & $\frac{1}{8}\sin^2(\frac{\pi}{8})$\\
0 & 1 & 1 & 1 & $\frac{1}{8}\cos^2(\frac{\pi}{8})$\\
1 & 0 & 0 & 0 & $\frac{1}{8}\cos^2(\frac{\pi}{8})$\\
1 & 0 & 0 & 1 & $\frac{1}{8}\sin^2(\frac{\pi}{8})$\\
1 & 0 & 1 & 0 & $\frac{1}{8}\sin^2(\frac{\pi}{8})$\\
1 & 0 & 1 & 1 & $\frac{1}{8}\cos^2(\frac{\pi}{8})$\\
1 & 1 & 0 & 0 & $\frac{1}{8}\sin^2(\frac{\pi}{8})$\\
1 & 1 & 0 & 1 & $\frac{1}{8}\cos^2(\frac{\pi}{8})$\\
1 & 1 & 1 & 0 & $\frac{1}{8}\cos^2(\frac{\pi}{8})$\\
1 & 1 & 1 & 1 & $\frac{1}{8}\sin^2(\frac{\pi}{8})$\\
\end{tabular}
\caption{The distribution generated by the quantum correlations. 
We omit a column for $f_S$ because $f_S=e$.}
\label{table1}
\end{table}

By construction these correlations satisfy relations~\ref{rel1} and~\ref{rel2}.

Bell's theorem states that if~\eqref{eq:bell} holds, then $P(DF_O|CE)$ must satisfy the CHSH inequality~\cite{PhysRevLett.23.880}, which can be expressed as
$P(D=F_O|00)+P(D=F_O|01)+P(D=F_O|10)+P(D\neq F_O|11)\leq3$.\footnote{Here $P(D=F_O|ce)=\sum_xP(D=x,F_O=x|ce)$.}
However, the correlations given in Table~\ref{table1} give $P(D=F_O|00)+P(D=F_O|01)+P(D=F_O|10)+P(D\neq F_O|11)=4\cos^2(\frac{\pi}{8})\approx3.41$, in violation of the CHSH inequality, and hence cannot be realised classically in $G_1$.
\end{proof} 

\section{Other causal structures}
The approach presented here can be applied to other causal structures. One case that follows directly from the above analysis is the causal structure $G_2$ in Figure~\ref{fig_qc_gap2}, which is the same as $G_1$ but without an arrow from $C$ to $E$. Since the quantum correlations we constructed in Lemma~\ref{lem:2} for the DAG $G_1$ do not use the causal link from $C$ to $E$, the same correlations are quantum compatible with $G_2$. Any distribution that is classically compatible with $G_2$ is also classically compatible with $G_1$ (removing an arrow in a causal structure adds additional constraints on the correlations). Hence, the quantum correlations are not classically compatible with $G_2$. The existence of a classical-quantum gap for $G_2$ is also shown in~\cite{upcoming2} using another method.

 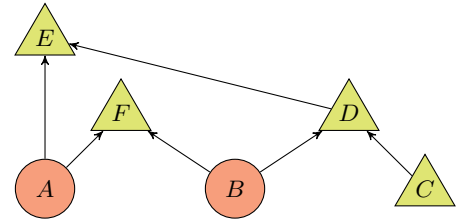
\begin{figure}
    \centering
     \begin{tikzpicture}
            \node[q](B) at (1,0){$B$};
			\node[q](A) at (-1.5,0){$A$};
			\node[c](C) at (3.5,0){$C$};
			\node[c](D) at (2.5,1){$D$}
			edge[e] (B)
			edge[e] (C);
			\node[c](F) at (-0.5,1){$F$}
			edge[e] (A)
			edge[e] (B);
           \node[c] (E) at (-1.5,2){$E$}
            edge[e] (A)
            edge[e] (D);
     \end{tikzpicture}
    \caption{The causal structure $G_2$ which also has a classical-quantum gap by our argument.}
    \label{fig_qc_gap2}
\end{figure}

The argument we use in this work can be thought of as an extension of that used by Fritz~\cite{fritz2012beyond} to establish a classical-quantum gap in the triangle causal structure shown in Figure~\ref{fig3}. A distribution is classically compatible with the triangle causal structure if there exists a distribution $Q(ABCDEF)$ such that
\begin{align*}                                  P(def)&=\sum_{abc}Q(abcdef),\text{ where}\\Q(abcdef)&=Q(a)Q(b)Q(c)Q(d|bc)Q(e|ac)Q(f|ab).
\end{align*}
Fritz then takes the additional conditions that $D$, $E$ and $F$ are two bits each, i.e., $D=(S_1,O_1)$, $E=(S'_1,S'_2)$, $F=(S_2,O_2)$, and that $S'_1=S_1$ and $S'_2=S_2$, the idea being that these imply that $S_1$ must have its causal origin in $C$ and $S_2$ must have its causal origin in $A$.
\begin{lemma}\label{lem:3}
  Let $P(DEF)$ be classically compatible with the triangle causal structure and satisfy the additional conditions above. It follows that
  \begin{multline}\label{eq:belltr}
P(s_1s_2o_1o_2)=\sum_{abc}Q(s_1s_2o_1o_2abc)\\=\sum_bQ(b)Q(s_1)Q(s_2)Q(o_1|s_1b)Q(o_2|s_2b).
\end{multline}
\end{lemma}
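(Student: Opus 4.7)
The plan is to mirror the strategy used for Lemma~\ref{lem:1}: I will use the equality constraints $S'_1=S_1$ and $S'_2=S_2$ to pin down where $S_1$ and $S_2$ must originate, leaving a Bell-like decomposition with $B$ as the shared hidden variable and $S_1,S_2$ playing the role of independent settings.

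First I would expand $P(def)$ using classical compatibility with the triangle DAG and substitute $d=(s_1,o_1)$, $e=(s_1,s_2)$ (with the equalities already imposed) and $f=(s_2,o_2)$, obtaining
\begin{equation*}
P(s_1 s_2 o_1 o_2)=\sum_{abc}Q(a)Q(b)Q(c)Q(s_1 o_1|bc)Q(s_1 s_2|ac)Q(s_2 o_2|ab).
\end{equation*}

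The key step is to argue that the constraint $S_1=S'_1$ forces $S_1$ to be a deterministic function of $C$ alone, and symmetrically that $S_2=S'_2$ forces $S_2$ to be a deterministic function of $A$ alone. The triangle DAG yields $D\perp E\mid\{A,B,C\}$, so conditional on $(a,b,c)$ the value $S_1$ read off $D$ is independent of the value $S'_1$ read off $E$, with marginals $Q(s_1|bc)$ and $Q(s'_1|ac)$ respectively. The requirement $P(S_1\neq S'_1\mid abc)=0$ for every $(a,b,c)$ in the support of $Q(a)Q(b)Q(c)$ then forces both conditionals to coincide and to be a delta distribution; since the first is independent of $a$ and the second is independent of $b$, the common value can depend only on $c$. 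Call it $g_1(c)$, so that $S_1=g_1(C)$ almost surely, and an analogous argument gives $S_2=g_2(A)$.

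Finally I would substitute these determinisms back in. The middle factor reduces to $Q(s_1 s_2|ac)=\delta_{s_1,g_1(c)}\delta_{s_2,g_2(a)}$, and defining $Q(s_1):=\sum_{c:\,g_1(c)=s_1}Q(c)$ together with $Q(o_1|s_1 b):=Q(s_1)^{-1}\sum_{c:\,g_1(c)=s_1}Q(c)Q(o_1|bc)$, and analogously for $s_2$, the sums over $a$ and $c$ collapse into $Q(s_1)Q(o_1|s_1 b)$ and $Q(s_2)Q(o_2|s_2 b)$, leaving precisely Equation~\eqref{eq:belltr} with $b$ as the only remaining latent. The main obstacle is the careful bookkeeping in the determinism step: one must restrict to the support of $Q(a)Q(b)Q(c)$ when extracting $g_1$ and $g_2$, and check that $Q(o_1|s_1 b)$ is well defined wherever $Q(s_1)Q(b)>0$; these are standard technicalities that do not affect the final identity.
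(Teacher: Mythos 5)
Your proof is correct, but it takes a genuinely different route from the paper's. The paper expands $P(s_1s_2o_1o_2)=\sum_bQ(b)Q(s_1|b)Q(s_2|bs_1)Q(o_1|s_1s_2b)Q(o_2|s_1s_2o_1b)$ by the chain rule with only $B$ retained as a latent, and then strips the unwanted conditioning purely via d-separation combined with the equality constraints: $S_1=S_1'$ and $B\perp E$ give $Q(s_1|b)=Q(s_1'|b)=Q(s_1')=Q(s_1)$, while $D\perp F|B$ handles $Q(s_2|bs_1)$ and the two outcome factors. It never needs to establish that the settings are deterministic. You instead keep the full factorization over $a,b,c$, use $Q(s_1s_1'|abc)=Q(s_1|bc)Q(s_1'|ac)$ together with perfect correlation to force both conditionals to be a common point mass, deduce $S_1=g_1(C)$ and $S_2=g_2(A)$ on the support, and then collapse the sums over $a$ and $c$ with explicitly constructed $Q(s_1)$ and $Q(o_1|s_1b)$. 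Your determinism step is sound (two independent variables that agree almost surely are both deterministic, and the $a$- and $b$-independence of the respective conditionals pins the common value to $c$ alone), and the bookkeeping you flag is indeed the only technical overhead; one small item worth adding is the check that your defined $Q(s_1)$ and $Q(o_1|s_1b)$ really are the marginal and conditional of the joint $Q$, which follows from $S_1=g_1(C)$ and $B\perp C$. What each approach buys: the paper's is shorter and avoids all support issues; yours proves the stronger structural fact that the settings are deterministic functions of the latents $A$ and $C$, which directly formalises the stated intuition that ``$S_1$ must have its causal origin in $C$'' and is closer in spirit to Fritz's original argument, from which the paper explicitly distances its own proof.
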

Note that~\eqref{eq:belltr} is Bell's local causality condition for settings $S_1$ and $S_2$ and outcomes $O_1$ and $O_2$. The proof we give is different to that in~\cite{fritz2012beyond}.
\begin{proof}
Since $P(S_1S_2O_1O_2)$ is classically compatible with the triangle causal structure we can write
\begin{multline*}
P(s_1s_2o_1o_2)=\sum_{abc}Q(s_1s_2o_1o_2abc)\\=
\sum_bQ(b)Q(s_1|b)Q(s_2|bs_1)Q(o_1|s_1s_2b)Q(o_2|s_1s_2o_1b),
\end{multline*}
where we have used the definition of conditional probability to expand $Q$ in the second line. Since $S_1=S_1'$, we have $Q(s_1|b)=Q(s_1'|b)$, but $B\perp E$ and Theorem~\ref{thm:dsep} means $Q(s_1'|b)=Q(s_1')$, and hence $Q(s_1|b)=Q(s_1)$. Likewise, $Q(s_2|bs_1)=Q(s_2|b)=Q(s_2'|b)=Q(s_2')=Q(s_2)$ follows from $F\perp D|B$, $S_2=S_2'$ and $B\perp E$. Similarly, $D\perp F|B$ gives both $Q(s_1o_1|s_2o_2b)=Q(s_2o_2|b)$ and $Q(s_2o_2|s_1o_1b)=Q(s_2o_2|b)$, which imply $Q(o_1|s_1s_2b)=Q(o_1|s_1b)$ and $Q(o_2|s_1s_2o_1b)=Q(o_2|s_2b)$. We hence reduce to~\eqref{eq:belltr}.
\end{proof}
Thus, as in the causal structure $G_1$ considered earlier, any quantum distribution that violates the CHSH inequality in the Bell causal structure can be used to show a classical-quantum gap in the triangle causal structure, by taking $A$ and $C$ binary uniform bits, $B$ to be an entangled state, $S_1=S_1'=C$, $S_2=S_2'=A$ and forming $O_1$ and $O_2$ using measurements on the entangled state with the choice of measurement depending on $C$ and $A$ respectively (in the same way as in points~\ref{it:q1} and~\ref{it:q2} of the quantum strategy for $G_1$).

 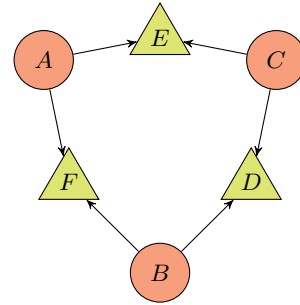
\begin{figure}
    \centering
     \begin{tikzpicture}
           [scale=0.6]
				\node[q](L) at (-0.55,2.7){$A$};
				\node[q](Z) at (2,-2){$B$};
				\node[q](O) at (4.55,2.7){$C$};
				\node[c](A) at (0,0){$F$}
				edge[e] (L)
				edge[e] (Z);
				\node[c](C) at (4,0){$D$}
				edge[e] (O)
				edge[e] (Z);
				\node[c](D) at (2,3.2){$E$}
				edge[e] (O)
				edge[e] (L);
     \end{tikzpicture}
    \caption{The triangle causal structure. For the argument we will take $D=(S_1,O_1)$, $E=(S_1',S_2')$ and $F=(S_2,O_2)$.}
    \label{fig3}
\end{figure}
Note that the quantum distribution used to show a gap in the triangle causal structure has $A$ and $C$ classical, and hence there is also a classical-quantum gap in variations of the triangle where either $A$ or $C$ are observed. 

\section{Conclusion}
We have shown that there are quantum correlations that cannot be classically realised in the causal structure $G_1$, the only remaining open case of up to six nodes. Our method corresponds to an extension of an idea used by Fritz in~\cite{fritz2012beyond} for answering the same question in the triangle causal structure, and we include a short argument for the triangle causal structure as a further illustration. Our arguments work at the level of the probabilities, in contrast to Fritz's use of entropies. 

Given a causal structure, we can also consider post-quantum correlations (those that obey the independence relations among the observed nodes that hold in the classical case but are not quantum compatible). The proof of HLP that all but 21 of the causal structures with at most six nodes have no classical-quantum gap more generally shows that all apart from those 21 have no post-quantum correlations, leaving 21 candidates (up to reductions) that could support post-quantum correlations. The results of the present paper, along with~\cite{bell,fritz2012beyond,HLP_2014,van2019quantum,chaves2018quantum,lauand2024quantum,upcoming2}, imply that, for up to six nodes, there are no causal structures with post-quantum correlations but without non-classical quantum correlations.

\bigskip

\noindent{\bf Additional note:} A preliminary version of this work is available in SK's PhD Thesis~\cite{shashaankthesis}.

\begin{acknowledgments}
    We thank Elie Wolfe for making us aware of the results of~\cite{upcoming2}. SK was supported by a studentship from the Department of Mathematics, University of York.
\end{acknowledgments}


\begin{thebibliography}{15}%
\makeatletter
\providecommand \@ifxundefined [1]{%
 \@ifx{#1\undefined}
}%
\providecommand \@ifnum [1]{%
 \ifnum #1\expandafter \@firstoftwo
 \else \expandafter \@secondoftwo
 \fi
}%
\providecommand \@ifx [1]{%
 \ifx #1\expandafter \@firstoftwo
 \else \expandafter \@secondoftwo
 \fi
}%
\providecommand \natexlab [1]{#1}%
\providecommand \enquote  [1]{``#1''}%
\providecommand \bibnamefont  [1]{#1}%
\providecommand \bibfnamefont [1]{#1}%
\providecommand \citenamefont [1]{#1}%
\providecommand \href@noop [0]{\@secondoftwo}%
\providecommand \href [0]{\begingroup \@sanitize@url \@href}%
\providecommand \@href[1]{\@@startlink{#1}\@@href}%
\providecommand \@@href[1]{\endgroup#1\@@endlink}%
\providecommand \@sanitize@url [0]{\catcode `\\12\catcode `\$12\catcode `\&12\catcode `\#12\catcode `\^12\catcode `\_12\catcode `\%12\relax}%
\providecommand \@@startlink[1]{}%
\providecommand \@@endlink[0]{}%
\providecommand \url  [0]{\begingroup\@sanitize@url \@url }%
\providecommand \@url [1]{\endgroup\@href {#1}{\urlprefix }}%
\providecommand \urlprefix  [0]{URL }%
\providecommand \Eprint [0]{\href }%
\providecommand \doibase [0]{http://dx.doi.org/}%
\providecommand \selectlanguage [0]{\@gobble}%
\providecommand \bibinfo  [0]{\@secondoftwo}%
\providecommand \bibfield  [0]{\@secondoftwo}%
\providecommand \translation [1]{[#1]}%
\providecommand \BibitemOpen [0]{}%
\providecommand \bibitemStop [0]{}%
\providecommand \bibitemNoStop [0]{.\EOS\space}%
\providecommand \EOS [0]{\spacefactor3000\relax}%
\providecommand \BibitemShut  [1]{\csname bibitem#1\endcsname}%
\let\auto@bib@innerbib\@empty
\bibitem [{\citenamefont {Bell}(1964)}]{bell}%
  \BibitemOpen
  \bibfield  {author} {\bibinfo {author} {\bibfnamefont {J.~S.}\ \bibnamefont {Bell}},\ }\bibfield  {title} {\enquote {\bibinfo {title} {{On the Einstein-Podolsky-Rosen paradox}},}\ }\href {\doibase 10.1103/PhysicsPhysiqueFizika.1.195} {\bibfield  {journal} {\bibinfo  {journal} {Physics}\ }\textbf {\bibinfo {volume} {1}},\ \bibinfo {pages} {195--200} (\bibinfo {year} {1964})}\BibitemShut {NoStop}%
\bibitem [{\citenamefont {Bell}(2004)}]{bellnouvelle}%
  \BibitemOpen
  \bibfield  {author} {\bibinfo {author} {\bibfnamefont {J.~S.}\ \bibnamefont {Bell}},\ }\bibfield  {title} {\enquote {\bibinfo {title} {La nouvelle cuisine},}\ }in\ \href {\doibase 10.1017/CBO9780511815676.026} {\emph {\bibinfo {booktitle} {Speakable and Unspeakable in quantum mechanics}}}\ (\bibinfo  {publisher} {Cambridge University Press},\ \bibinfo {year} {2004})\ pp.\ \bibinfo {pages} {232--248}\BibitemShut {NoStop}%
\bibitem [{\citenamefont {Wood}\ and\ \citenamefont {Spekkens}(2015)}]{wood}%
  \BibitemOpen
  \bibfield  {author} {\bibinfo {author} {\bibfnamefont {C.~J.}\ \bibnamefont {Wood}}\ and\ \bibinfo {author} {\bibfnamefont {R.~W.}\ \bibnamefont {Spekkens}},\ }\bibfield  {title} {\enquote {\bibinfo {title} {The lesson of causal discovery algorithms for quantum correlations: Causal explanations of {B}ell-inequality violations require fine-tuning},}\ }\href {\doibase 10.1088/1367-2630/17/3/033002} {\bibfield  {journal} {\bibinfo  {journal} {New J. Phys.}\ }\textbf {\bibinfo {volume} {17}},\ \bibinfo {pages} {033002} (\bibinfo {year} {2015})}\BibitemShut {NoStop}%
\bibitem [{\citenamefont {Fritz}(2012)}]{fritz2012beyond}%
  \BibitemOpen
  \bibfield  {author} {\bibinfo {author} {\bibfnamefont {T.}~\bibnamefont {Fritz}},\ }\bibfield  {title} {\enquote {\bibinfo {title} {Beyond {B}ell's theorem: correlation scenarios},}\ }\href {\doibase 10.1088/1367-2630/14/10/103001} {\bibfield  {journal} {\bibinfo  {journal} {New J. Phys.}\ }\textbf {\bibinfo {volume} {14}},\ \bibinfo {pages} {103001} (\bibinfo {year} {2012})}\BibitemShut {NoStop}%
\bibitem [{\citenamefont {Fritz}(2016)}]{fritz2016beyond}%
  \BibitemOpen
  \bibfield  {author} {\bibinfo {author} {\bibfnamefont {T.}~\bibnamefont {Fritz}},\ }\bibfield  {title} {\enquote {\bibinfo {title} {Beyond {B}ell’s theorem {II}: Scenarios with arbitrary causal structure},}\ }\href {\doibase 10.1007/s00220-015-2495-5} {\bibfield  {journal} {\bibinfo  {journal} {Communications in Mathematical Physics}\ }\textbf {\bibinfo {volume} {341}},\ \bibinfo {pages} {391--434} (\bibinfo {year} {2016})}\BibitemShut {NoStop}%
\bibitem [{\citenamefont {Henson}\ \emph {et~al.}(2014)\citenamefont {Henson}, \citenamefont {Lal},\ and\ \citenamefont {Pusey}}]{HLP_2014}%
  \BibitemOpen
  \bibfield  {author} {\bibinfo {author} {\bibfnamefont {J.}~\bibnamefont {Henson}}, \bibinfo {author} {\bibfnamefont {R.}~\bibnamefont {Lal}}, \ and\ \bibinfo {author} {\bibfnamefont {M.~F.}\ \bibnamefont {Pusey}},\ }\bibfield  {title} {\enquote {\bibinfo {title} {{Theory-independent limits on correlations from generalized Bayesian networks}},}\ }\href {\doibase 10.1088/1367-2630/16/11/113043} {\bibfield  {journal} {\bibinfo  {journal} {New J. Phys.}\ }\textbf {\bibinfo {volume} {16}},\ \bibinfo {pages} {113043} (\bibinfo {year} {2014})}\BibitemShut {NoStop}%
\bibitem [{\citenamefont {Van~Himbeeck}\ \emph {et~al.}(2019)\citenamefont {Van~Himbeeck}, \citenamefont {Brask}, \citenamefont {Pironio}, \citenamefont {Ramanathan}, \citenamefont {Sainz},\ and\ \citenamefont {Wolfe}}]{van2019quantum}%
  \BibitemOpen
  \bibfield  {author} {\bibinfo {author} {\bibfnamefont {T.}~\bibnamefont {Van~Himbeeck}}, \bibinfo {author} {\bibfnamefont {J.~B.}\ \bibnamefont {Brask}}, \bibinfo {author} {\bibfnamefont {S.}~\bibnamefont {Pironio}}, \bibinfo {author} {\bibfnamefont {R.}~\bibnamefont {Ramanathan}}, \bibinfo {author} {\bibfnamefont {A.~B.}\ \bibnamefont {Sainz}}, \ and\ \bibinfo {author} {\bibfnamefont {E.}~\bibnamefont {Wolfe}},\ }\bibfield  {title} {\enquote {\bibinfo {title} {Quantum violations in the instrumental scenario and their relations to the {B}ell scenario},}\ }\href {\doibase 10.22331/q-2019-09-16-186} {\bibfield  {journal} {\bibinfo  {journal} {Quantum}\ }\textbf {\bibinfo {volume} {3}},\ \bibinfo {pages} {186} (\bibinfo {year} {2019})}\BibitemShut {NoStop}%
\bibitem [{\citenamefont {Chaves}\ \emph {et~al.}(2018)\citenamefont {Chaves}, \citenamefont {Carvacho}, \citenamefont {Agresti}, \citenamefont {Di~Giulio}, \citenamefont {Aolita}, \citenamefont {Giacomini},\ and\ \citenamefont {Sciarrino}}]{chaves2018quantum}%
  \BibitemOpen
  \bibfield  {author} {\bibinfo {author} {\bibfnamefont {R.}~\bibnamefont {Chaves}}, \bibinfo {author} {\bibfnamefont {G.}~\bibnamefont {Carvacho}}, \bibinfo {author} {\bibfnamefont {I.}~\bibnamefont {Agresti}}, \bibinfo {author} {\bibfnamefont {V.}~\bibnamefont {Di~Giulio}}, \bibinfo {author} {\bibfnamefont {L.}~\bibnamefont {Aolita}}, \bibinfo {author} {\bibfnamefont {S.}~\bibnamefont {Giacomini}}, \ and\ \bibinfo {author} {\bibfnamefont {F.}~\bibnamefont {Sciarrino}},\ }\bibfield  {title} {\enquote {\bibinfo {title} {Quantum violation of an instrumental test},}\ }\href {https://www.nature.com/articles/s41567-017-0008-5} {\bibfield  {journal} {\bibinfo  {journal} {Nature Physics}\ }\textbf {\bibinfo {volume} {14}},\ \bibinfo {pages} {291--296} (\bibinfo {year} {2018})}\BibitemShut {NoStop}%
\bibitem [{\citenamefont {Lauand}\ \emph {et~al.}(2024)\citenamefont {Lauand}, \citenamefont {Poderini}, \citenamefont {Rabelo},\ and\ \citenamefont {Chaves}}]{lauand2024quantum}%
  \BibitemOpen
  \bibfield  {author} {\bibinfo {author} {\bibfnamefont {P.}~\bibnamefont {Lauand}}, \bibinfo {author} {\bibfnamefont {D.}~\bibnamefont {Poderini}}, \bibinfo {author} {\bibfnamefont {R.}~\bibnamefont {Rabelo}}, \ and\ \bibinfo {author} {\bibfnamefont {R.}~\bibnamefont {Chaves}},\ }\bibfield  {title} {\enquote {\bibinfo {title} {Quantum non-classicality in the simplest causal network},}\ }\href {https://arxiv.org/abs/2404.12790} {\bibfield  {journal} {\bibinfo  {journal} {arXiv preprint arXiv:2404.12790}\ } (\bibinfo {year} {2024})}\BibitemShut {NoStop}%
\bibitem [{\citenamefont {Fox}\ \emph {et~al.}(202?)\citenamefont {Fox}, \citenamefont {Fraser}, \citenamefont {Gonzalez-Garcıa}, \citenamefont {Ansanelli}, \citenamefont {Spekkens}, \citenamefont {Srivastava},\ and\ \citenamefont {Wolfe}}]{upcoming2}%
  \BibitemOpen
  \bibfield  {author} {\bibinfo {author} {\bibfnamefont {M.}~\bibnamefont {Fox}}, \bibinfo {author} {\bibfnamefont {T.~C.}\ \bibnamefont {Fraser}}, \bibinfo {author} {\bibfnamefont {S.}~\bibnamefont {Gonzalez-Garcıa}}, \bibinfo {author} {\bibfnamefont {M.~M.}\ \bibnamefont {Ansanelli}}, \bibinfo {author} {\bibfnamefont {R.~W.}\ \bibnamefont {Spekkens}}, \bibinfo {author} {\bibfnamefont {M.}~\bibnamefont {Srivastava}}, \ and\ \bibinfo {author} {\bibfnamefont {E.}~\bibnamefont {Wolfe}},\ }\bibfield  {title} {\enquote{\bibinfo {title} {Quantum-classical gaps in causal structures: leveraging known cases to find new ones},}\ }\href@noop {} {\bibfield  {journal} {\bibinfo  {journal} {(being written up)}\ } (\bibinfo {year} {202?})}\BibitemShut {NoStop}%
\bibitem [{\citenamefont {Pearl}(2009)}]{pearl}%
  \BibitemOpen
  \bibfield  {author} {\bibinfo {author} {\bibfnamefont {J.}~\bibnamefont {Pearl}},\ }\href {\doibase 10.1017/CBO9780511803161} {\emph {\bibinfo {title} {{Causality}}}},\ \bibinfo {edition} {2nd}\ ed.\ (\bibinfo  {publisher} {Cambridge University Press},\ \bibinfo {year} {2009})\BibitemShut {NoStop}%
\bibitem [{\citenamefont {Geiger}(1987)}]{Geiger1987}%
  \BibitemOpen
  \bibfield  {author} {\bibinfo {author} {\bibfnamefont {D.}~\bibnamefont {Geiger}},\ }\href {http://fmdb.cs.ucla.edu/Treports/880053.pdf} {\emph {\bibinfo {title} {Towards the Formalization of Informational Dependencies}}},\ \bibinfo {type} {Tech. Rep.}\ \bibinfo {number} {880053}\ (\bibinfo  {institution} {UCLA Computer Science},\ \bibinfo {year} {1987})\BibitemShut {NoStop}%
\bibitem [{\citenamefont {Verma}\ and\ \citenamefont {Pearl}(1988)}]{Verma1988}%
  \BibitemOpen
  \bibfield  {author} {\bibinfo {author} {\bibfnamefont {T.}~\bibnamefont {Verma}}\ and\ \bibinfo {author} {\bibfnamefont {J.}~\bibnamefont {Pearl}},\ }\bibfield  {title} {\enquote {\bibinfo {title} {Causal networks: Semantics and expressiveness},}\ }in\ \href {\doibase 10.48550/arXiv.1304.2379} {\emph {\bibinfo {booktitle} {Proceedings of the 4th Workshop on Uncertainty in Artificial Intelligence}}}\ (\bibinfo {year} {1988})\ pp.\ \bibinfo {pages} {352--359}\BibitemShut {NoStop}%
\bibitem [{\citenamefont {Clauser}\ \emph {et~al.}(1969)\citenamefont {Clauser}, \citenamefont {Horne}, \citenamefont {Shimony},\ and\ \citenamefont {Holt}}]{PhysRevLett.23.880}%
  \BibitemOpen
  \bibfield  {author} {\bibinfo {author} {\bibfnamefont {J.~F.}\ \bibnamefont {Clauser}}, \bibinfo {author} {\bibfnamefont {M.~A.}\ \bibnamefont {Horne}}, \bibinfo {author} {\bibfnamefont {A.}~\bibnamefont {Shimony}}, \ and\ \bibinfo {author} {\bibfnamefont {R.~A.}\ \bibnamefont {Holt}},\ }\bibfield  {title} {\enquote {\bibinfo {title} {Proposed experiment to test local hidden-variable theories},}\ }\href {\doibase 10.1103/PhysRevLett.23.880} {\bibfield  {journal} {\bibinfo  {journal} {Phys. Rev. Lett.}\ }\textbf {\bibinfo {volume} {23}},\ \bibinfo {pages} {880--884} (\bibinfo {year} {1969})}\BibitemShut {NoStop}%
\bibitem [{\citenamefont {Khanna}(2025)}]{shashaankthesis}%
  \BibitemOpen
  \bibfield  {author} {\bibinfo {author} {\bibfnamefont {S.}~\bibnamefont {Khanna}},\ }\emph {\bibinfo {title} {Exploring non-classical correlations in causal scenarios}},\ \href {https://etheses.whiterose.ac.uk/id/eprint/36307/} {Ph.D. thesis},\ \bibinfo  {school} {University of York} (\bibinfo {year} {2025})\BibitemShut {NoStop}%
\end{thebibliography}
\end{document}